\def\BibTeX{{\rm B\kern-.05em{\sc i\kern-.025em b}\kern-.08em
    T\kern-.1667em\lower.7ex\hbox{E}\kern-.125emX}}
\newcommand{\cmark}{\ding{51}}%
\newcommand{\xmark}{\ding{55}}%
\newtheorem{definition}{Definition}
\newtheorem{theorem}{Theorem}
\newtheorem{proof}{Proof}
\begin{document}

\title{Privacy Analysis and Evaluation Policy of Blockchain-based Anonymous Cryptocurrencies}

\author{
\IEEEauthorblockN{Takeshi Miyamae}
\IEEEauthorblockA{\textit{Security Laboratory} \\
\textit{Fujitsu Laboratories Ltd.}\\
Kawasaki, Japan \\
miyamae.takeshi@fujitsu.com}
\and
\IEEEauthorblockN{Kanta Matsuura}
\IEEEauthorblockA{\textit{Institute of Industrial Science} \\
\textit{The University of Tokyo}\\
Tokyo, Japan \\
kanta@iis.u-tokyo.ac.jp}
}

\maketitle

\begin{abstract}
In blockchain-based anonymous cryptocurrencies, due to their tamper-resistance and transparency characteristics, transaction data are initially required to be anonymous, with the help of various cryptographic techniques, e.g., commitment schemes and zero-knowledge proofs.
Also, cryptocurrencies are different from existing anonymous messaging protocols regarding the software architecture and the underlying security model.
Due to these differences, the sense of anonymity must be specifically defined for anonymous cryptocurrencies, and the anonymity in each anonymous cryptocurrency must be analyzed and evaluated based on the specific architecture model.

In this paper, we first propose a specific architecture model with three software layers to anonymous cryptocurrencies.
Next, we introduce definitions of fundamental privacy properties (Pfitzmann's anonymity, unlinkability, and pseudonymity) and comprehensively analyze each privacy property for each architecture layer of anonymous cryptocurrencies to establish a privacy evaluation policy for anonymous cryptocurrencies.
Finally, we fairly compare the privacy of current leading anonymous cryptocurrencies (e.g., Zerocash, CryptoNote, and Mimblewimble) using the privacy evaluation policy.
\end{abstract}

\begin{IEEEkeywords}
cryptocurrency, anonymity, unlinkability, pseudonymity, secret-sharing layer
\end{IEEEkeywords}

\section{Introduction}
\label{sec:introduction}

\subsection{Background}
Since blockchain-based technologies usually involve trustless security models, it is essential to consider privacy. Particularly in Bitcoin \cite{Nakamoto2008}, quite a few privacy concerns due to its transparency of the transaction details on the public ledger have been highlighted. For example, an attacker attempts to infer from the transaction graph that two or more seemingly unrelated Bitcoin addresses belong to a Bitcoin user (address linking attack) \cite{Ron2013}. On the other hand, another attacker attempts to link Bitcoin users with their real-world user identities from the indirect behavior of the Bitcoin transactions (de-anonymization) \cite{Reid2013}.

However, privacy-enhancing techniques for blockchain-based anonymous cryptocurrencies (after this referred to as ``anonymous cryptocurrencies'') have not been sufficiently compared and analyzed so far because anonymous cryptocurrencies are relatively recent and unique transaction systems. Unlike the transaction data accumulated in-house, where the data are first collected and protected in each company and can be disclosed to other companies for data analysis, transaction data on a blockchain are assumed to be shared by even malicious users from the beginning. In the case of in-house databases, the data are typically assumed to be modified with generalization, suppression, or substitution techniques, including local and global recoding \cite{Terrovitis2011} and microaggregation \cite{DomingoFerrer2002}, to achieve anonymity when disclosed. And these techniques can be evaluated based on specific privacy criteria such as k-anonymity \cite{Sweeney2002}\cite{Sweeney2002a} or l-diversity \cite{Machanavajjhala2007}. In the case of blockchain-based transaction methods, however, from the characteristics of tamper-resistance and transparency, the data are initially required to be anonymous, with the help of cryptographic techniques such as commitment schemes and zero-knowledge proofs. Therefore, it is unusual for anonymous cryptocurrencies to benefit from such modification techniques that can be applied to in-house databases.

Moreover, while most anonymous messaging protocols \cite{Unger2015} are also initially anonymous, the underlying security model for anonymous cryptocurrencies is different. Besides, the software architecture of anonymous cryptocurrencies is more complicated than that of anonymous messaging protocols.
Due to these differences, the sense of anonymity in anonymous cryptocurrencies must be specifically analyzed, and the anonymity in each anonymous cryptocurrency must be evaluated based on the analysis.

To analyze and formalize the anonymity in anonymous cryptocurrencies, we also have to use supplemental measures of anonymity. For example, Wu \cite{Wu2018} advocates that
\begin{equation}
anonymity = pseudonymity + unlinkability,
\end{equation}
referring to Narayanan's study \cite{Narayanan2016}. Although this formula is intuitively correct, the logical relations between each of the measures, e.g., one is a necessary or sufficient condition of another, are not evident from the old works in the field of anonymous cryptocurrency.

In this paper, we first propose an architecture model for anonymous cryptocurrency with three software layers (Section \ref{sec:architecture_model}).
Next, we introduce the definitions of fundamental privacy properties (Pfitzmann's anonymity, unlinkability, and pseudonymity) (Section \ref{sec:definitions_privacy_properties}), and comprehensively analyze each privacy property for each architecture layer in anonymous cryptocurrencies that we defined in this paper (Section \ref{sec:analysis_privacy_properties}).
In this analysis, we propose four linkability attack models (SLLA, TLLA, RCCLA, SCCLA), with which we cover most of the privacy vulnerabilities in anonymous cryptocurrencies.

\subsection{Related Works}
Androulaki et al. \cite{Androulaki2013} proposed a progressive definition of ``address unlinkability'' in Bitcoin and a way to quantify it. However, our main interest is the software architecture of anonymous cryptocurrency, including a secret-sharing model between a sender and a recipient that makes privacy analysis complicated or items of interest (IOIs) of anonymous cryptocurrency that can suffer from privacy attacks. Therefore, both do not conflict.

Although academic researchers in the field of anonymous cryptocurrencies have already noticed that unlinkability is highly essential, the definition of unlinkability depends on the researchers, which makes it difficult for us to understand the exact meaning of unlinkability and the relationship between anonymity and unlinkability, etc.
For example, the author of CryptoNote \cite{Saberhagen2013} defined unlinkability as ``for any two outgoing transactions it is impossible to prove they were sent to the same person'', which meant only a specific aspect of Pfitzmann's unlinkability \cite{Pfitzmann2010}. In contrast, the authors of Zerocash \cite{BenSasson2014} introduced another privacy property called `ledger indistinguishability,' which enhances more robust privacy than unlinkability. Therefore, Zerocash implicitly achieves unlinkability without using the word `unlinkability.'

No one opposes that \cite{Bonneau2015} is one of the great surveys of blockchain technology. However, since only a few pages can be spared for the topic of privacy, the discussion of unlinkability is not sufficient. \cite{Amarasinghe2019} is also an excellent survey, but the criteria for evaluating unlinkability are not so elaborated and fairly applied to each anonymous cryptocurrency. As a result, in all the cases of Zerocash, CryptoNote, and Mimblewimble, the level of unlinkability is ``Moderately High'', which does not help anything for us after all.

\section{Architecture Model for Anonymous Cryptocurrency}
\label{sec:architecture_model}

In the case of Bitcoin, which is always transparent like other non-anonymous cryptocurrencies, all the transaction information is disclosed to anyone.
Therefore, a sender does not know about the ledger state more than other users can know at all.
On the other hand, usually in the case of anonymous cryptocurrencies, only the sender and its corresponding recipient know about the sent coins.
Conversely, if other users could accidentally share anything about the coins with the sender and recipient for some reason, it would never be an excellent anonymous cryptocurrency.

Figure \ref{fig:architecture_of_anonymous_cryptocurrency} depicts a general software architecture of anonymous cryptocurrency.
We introduce a software model for anonymous cryptocurrency that comprises transport layer (Layer-0), ledger layer (Layer-1), and secret-sharing layer (Layer-2).
As shown in \cite{Gudgeon2020}, blockchain applications with three layers is not unusual.
Although Layer-2 usually means payment-channel protocols, however, calling the secret-sharing layer in anonymous cryptocurrencies `Layer-2' is our novel viewpoint.

From the viewpoint of data repository, an anonymous cryptocurrencies generally depend on each user's local secure database (usually called `wallet') to allow the sender and recipient to share some secret of sent coins, whereas non-anonymous cryptocurrencies such as Bitcoin only use their distributed ledger (usually called `blockchain').

\begin{figure}
\centering
\includegraphics[width=7.7cm]{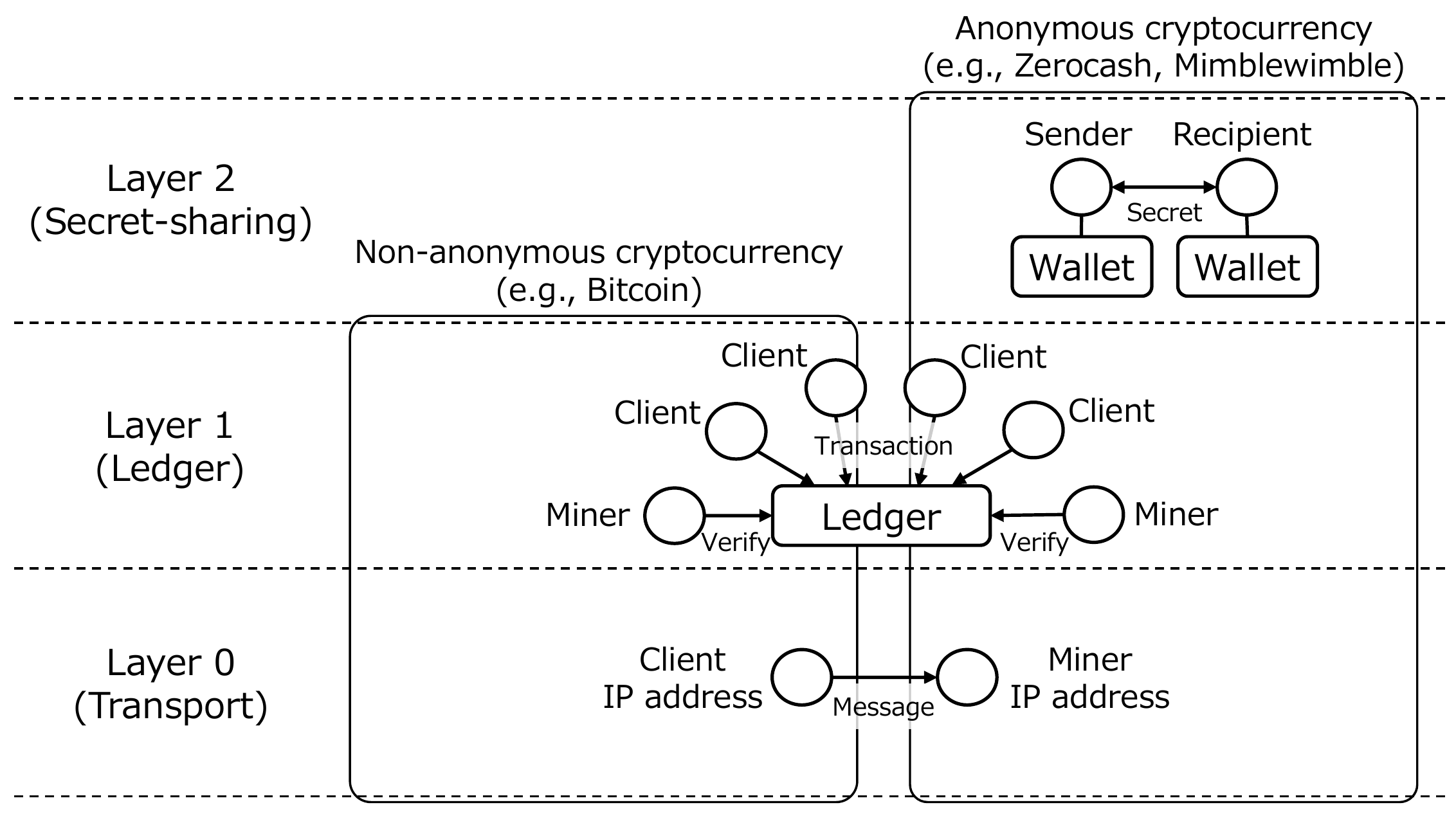}
\caption{Architecture Model for Anonymous Cryptocurrency}
\label{fig:architecture_of_anonymous_cryptocurrency}
\end{figure}

\section{Definitions of Fundamental Privacy Properties}
\label{sec:definitions_privacy_properties}

In this section, we introduce the definitions of fundamental privacy properties (Pfitzmann's anonymity, unlinkability, and pseudonymity) \cite{Pfitzmann2010}.

\subsection{Anonymity}
\label{subsub:anonymity}
Pfitzmann defines anonymity in a highly general fashion for researchers who are engaged in any information and communication systems to analyze privacy concerns.

\begin{definition}
[Pfitzmann's Anonymity] Anonymity of a subject from an attacker's perspective means that the attacker cannot sufficiently identify the subject within a set of subjects, the anonymity set.
\end{definition}

First, Pfitzmann's privacy properties are defined using the abstract words of entities, such as `subjects' (senders) `execute actions' (send using a communication network) `on objects' (messages) to `subjects' (recipients), as depicted in Figure \ref{fig:pfitzmann_anonymity}.

\begin{figure}
\centering
\includegraphics[width=7.7cm]{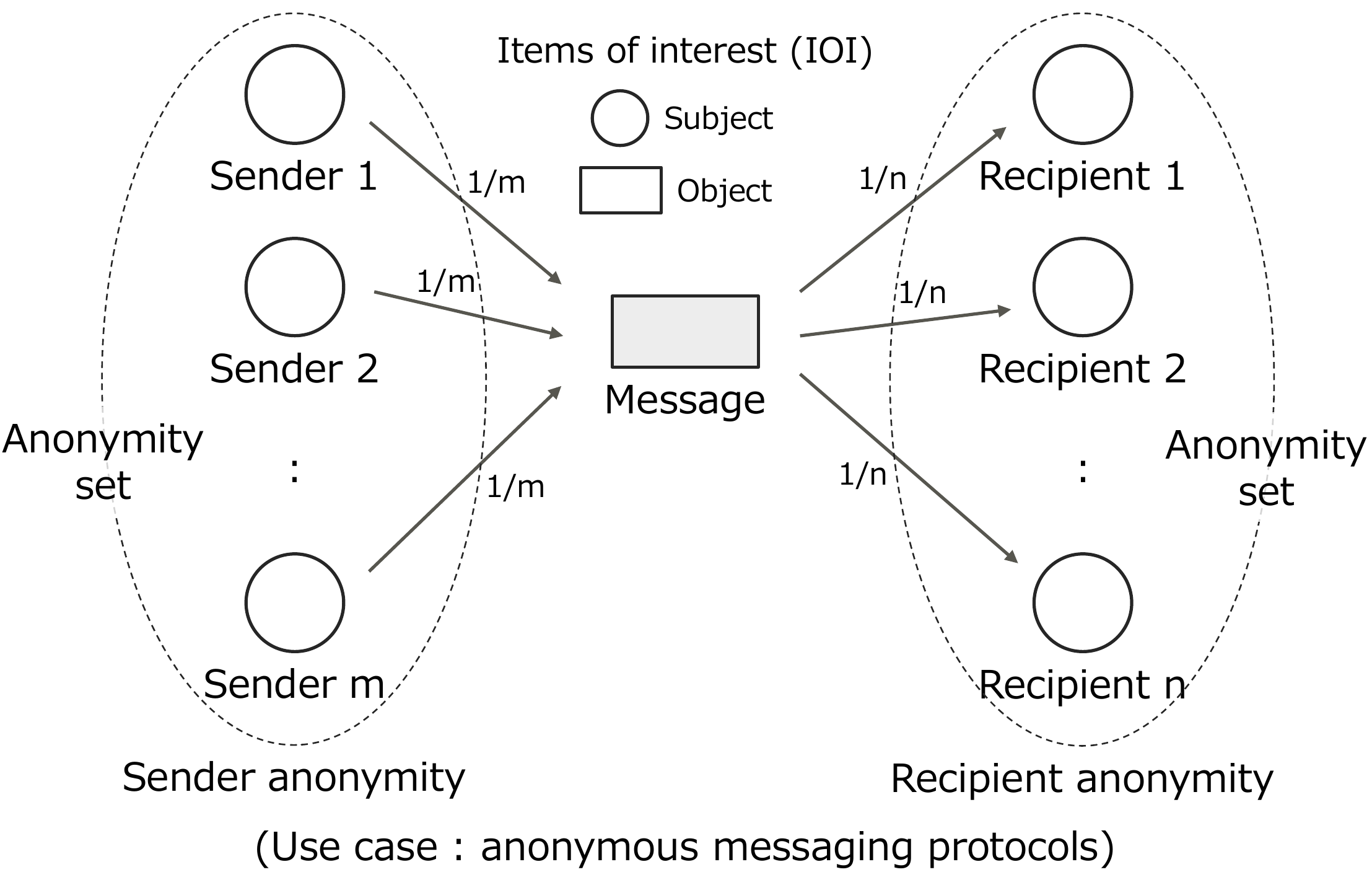}
\caption{Pfitzmann's Definition of Anonymity}
\label{fig:pfitzmann_anonymity}
\end{figure}

Second, the `anonymity set' is a set of all possible subjects among which the subject is identified. Regarding senders in a messaging protocol, for example, the anonymity set consists of the subjects that could send a message. Since anonymity sets could differ over time, a sender can only be anonymous within a set of potential senders at each time, which may be a subset of worldwide all subjects that can send a message from time to time.

Third, the word ``sufficiently'' makes it possible for people to quantify the anonymity.

Forth, note that each statement is made from the perspective of an attacker who may be interested in monitoring what kind of communication is ongoing, or what kind of messages are communicated.

Finally, we will mention the capabilities of an attacker against anonymity.
\begin{itemize}
\item An attacker against anonymity does not forget anything if they are once informed. Therefore, the anonymity set can never increase.
\item An attacker against anonymity can manipulate the communication message if they need to.
\item An attacker against anonymity can be an outsider tapping communication lines, while they can also be an insider capable of engaging in regular communications.
\end{itemize}

Finally, by comparing it with Common Criteria's definition of anonymity \cite{CC2012}, we can find other advantages of Pfitzmann's definition.

\begin{definition}
[Common Criteria's Anonymity] This family (the functional family of anonymity) ensures that a user may use a resource or service without disclosing the user's identity. The requirements for anonymity provide protection of the user identity. Anonymity is not intended to protect the subject identity. [...] Anonymity, requires that other users or subjects are unable to determine the identity of a user bound to a subject or operation.
\end{definition}

In the case of Common Criteria, the definition of ``identity'' seems independent of the definition of anonymity; therefore, we are required to define ``identity'' using, e.g., personally identifiable information (PII) \cite{NIST2010}, to complete the definition of anonymity.
In Pfitzmann's definition, on the other hand, the sense of ``identify the subject'' can be shown by defining the specific cases of anonymity defined in \ref{subsub:anonymity_in_terms_of_unlinkability}.

\subsection{Unlinkability}
\label{subsub:unlinkability}
Pfitzmann defines unlinkability as indistinguishability of items of interest (IOIs).

\begin{definition}
[Pfitzmann's Unlinkability]
\label{def:unlinkability}
Unlinkability of two or more items of interest (IOIs, e.g., subjects, messages, actions, ...) from an attacker’s perspective means that within the system (comprising these and possibly other items), the attacker cannot sufficiently distinguish whether these IOIs are related or not.
\end{definition}

One of the differences between the definitions of anonymity and unlinkability is that anonymity is applied to a single entity, while unlinkability is applied to multiple entities.
Another is that anonymity can only be applied to a subject, while unlinkability can be applied to any entity.

We can intuitively understand Definition \ref{def:unlinkability} by citing another part from the Pfitzmann's paper \cite{Pfitzmann2010}.

\begin{quote}
\normalsize
In a scenario with at least two senders (a sender anonymity set), two messages sent by subjects within the same anonymity set are unlinkable for an attacker if, for him, the probability that these two messages are sent by the same sender is sufficiently close to 1/(the number of senders).
\end{quote}

Figure \ref{fig:pfitzmann_unlinkability} illustrates the claim in this citation (we call this scenario ``message unlinkability'').

\begin{figure}[h]
\centering
\includegraphics[width=7.7cm]{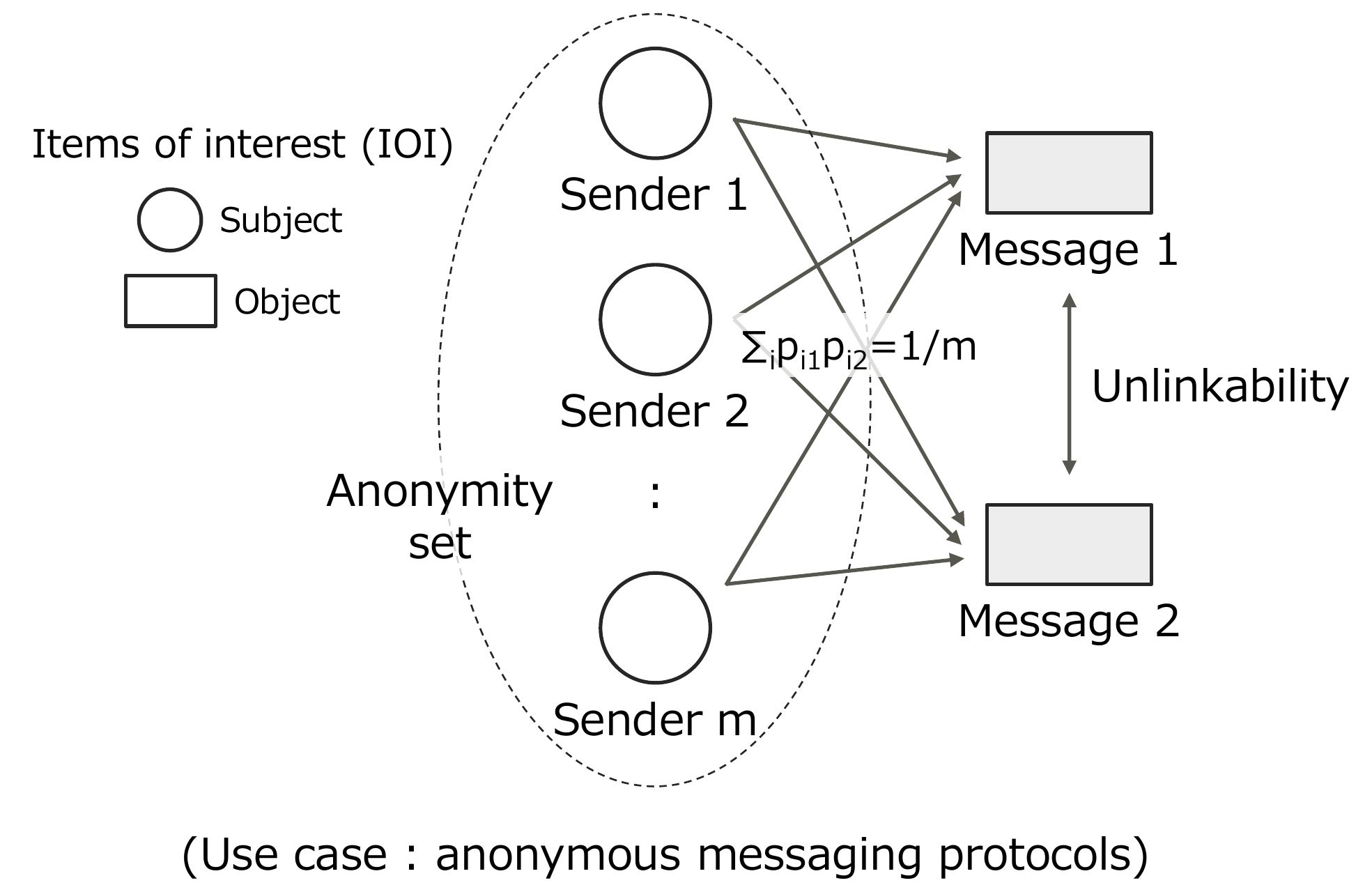}
\caption{Pfitzmann's Definition of Unlinkability}
\label{fig:pfitzmann_unlinkability}
\end{figure}

The indistinguishability in Definition \ref{def:unlinkability} is expressed in more accessible terms such as ``not sent by the same sender.'' This expression suggests how to apply the definition to real-world privacy problems.

We can extend the scenario to the case of more than two messages, where total unlinkability is achieved if and only if any two of the messages achieve a primitive unlinkability.
Moreover, we can apply the definition of unlinkability to different IOIs, e.g., a sender and recipient, by the medium of another IOI, e.g., a message.

To easily extend the application of unlinkability, the following theorems we found are convenient (Unlinkability is a partial equivalence relation).

\begin{theorem}
The relation of unlinkability between two sets of IOIs is symmetric.
\end{theorem}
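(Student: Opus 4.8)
The plan is to reduce the symmetry of the unlinkability relation to the symmetry of the underlying ``relatedness'' predicate on which Definition~\ref{def:unlinkability} is built. Since unlinkability is defined as the attacker's inability to \emph{sufficiently distinguish} whether two sets of IOIs are related or not, it suffices to observe that the statement ``these IOIs are related'' carries no inherent direction, so that swapping the two arguments leaves every quantity in the definition unchanged.

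First I would unwind the definition for two sets of IOIs, say $A$ and $B$. Unlinkability of $A$ and $B$ asserts that the attacker cannot sufficiently distinguish the event ``$A$ and $B$ are related'' from its negation; concretely, the attacker's a posteriori probability $\Pr[A \sim B]$ stays sufficiently close to the a priori reference value (for instance the $1/(\text{number of senders})$ baseline of the message-unlinkability scenario cited from \cite{Pfitzmann2010}). Second, I would observe that relatedness is a symmetric binary relation: the event $A \sim B$ and the event $B \sim A$ denote the very same underlying fact, so $\Pr[A \sim B] = \Pr[B \sim A]$ as an identity between the probabilities of a single event. Third, because both the attacker's estimate and the reference baseline are functions of this one symmetric event, the comparison ``sufficiently close'' evaluates identically under the exchange of $A$ and $B$. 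Hence unlinkability of $A$ and $B$ holds exactly when unlinkability of $B$ and $A$ holds, which is the desired symmetry.

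The hard part will be arguing that the reference quantity against which ``sufficiently'' is measured is itself symmetric, rather than merely the attacker's posterior. I would address this by noting that the a priori probability of $A$ and $B$ being related is a property of the unordered pair $\{A,B\}$, and therefore does not depend on which element is named first; the same applies to the threshold implicit in the word ``sufficiently.'' A secondary subtlety is that the statement concerns \emph{sets} of IOIs rather than individual items, but this causes no difficulty: relatedness between two sets is still a symmetric relation, so the argument transfers verbatim. With these observations, symmetry reduces to the elementary fact that $A \sim B$ and $B \sim A$ are the same event.
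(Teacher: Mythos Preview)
Your approach differs substantially from the paper's. You argue at the level of Definition~\ref{def:unlinkability}: since ``$A$ and $B$ are related'' is the same event as ``$B$ and $A$ are related,'' both the attacker's posterior and the a~priori baseline are invariant under the swap, and symmetry follows trivially. The paper, by contrast, works with the more concrete ``by the medium of'' formulation introduced just before the theorem: ``$B$ achieves unlinkability by the medium of $A$'' is read as the conditional statement $\Pr[e_A\mid e_B]=1/n_A$ for elements $e_A\in A$, $e_B\in B$. Symmetry is then obtained by a short Bayes-type computation: the joint probability is $1/(n_A n_B)$, whence $\Pr[e_B\mid e_A]=1/n_B$.

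The gap in your argument is exactly the step you flag as ``the hard part.'' In the paper's operational reading, the reference baseline is \emph{not} invariant under the swap: it is $1/n_A$ in one direction and $1/n_B$ in the other. Your assertion that the baseline depends only on the unordered pair $\{A,B\}$ therefore fails in this setting, and the reduction to ``$A\sim B$ and $B\sim A$ are the same event'' is insufficient---one still needs the conditional-probability computation the paper carries out (together with its implicit assumption that each marginal is uniform). Your argument would go through if unlinkability were defined purely as closeness of $\Pr[A\sim B]$ to a single direction-independent threshold, but that is not the model the theorem is about.
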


\begin{proof}
Let $A$ be an IOI set, where the number of elements is $n_A$, and $e_A \in A$ ($B$, $n_B$, $e_B$, etc. are similarly defined). If $B$ achieves unlinkability by the medium of $A$, then the probability that a randomly picked-up element in $B$ is $e_B$ and the associated element in $A$ with $e_B$ is $e_A$ is $1/{n_A}{n_B}$. Hence, the probability of $e_B$ under the condition of $e_A$ is ${(1/{n_A}{n_B})}/{(1/{n_A})} = 1/{n_B}$. Therefore, $A$ achieves unlinkability by the medium of $B$.
\end{proof}

\begin{theorem}
The relation of unlinkability between two sets of IOIs is transitive.
\end{theorem}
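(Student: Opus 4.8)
The plan is to reproduce the bookkeeping of the symmetry proof, but now with three IOI sets arranged in a chain. First I would fix notation exactly as before: let $A$, $B$, $C$ be IOI sets with $n_A$, $n_B$, $n_C$ elements and representatives $e_A \in A$, $e_B \in B$, $e_C \in C$. The transitivity hypotheses are that $B$ achieves unlinkability by the medium of $A$ and that $C$ achieves unlinkability by the medium of $B$. Read operationally as in the previous proof, these say that the joint probabilities are uniform products, $P(e_A, e_B) = 1/(n_A n_B)$ and $P(e_B, e_C) = 1/(n_B n_C)$. The goal is then to derive $P(e_A, e_C) = 1/(n_A n_C)$, i.e.\ that $C$ achieves unlinkability by the medium of $A$.

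Second, I would rewrite each hypothesis in conditional form, reusing the division step already used for symmetry. Since each set is uniform on its own elements, $P(e_A) = 1/n_A$ and $P(e_B) = 1/n_B$, the hypotheses become $P(e_B \mid e_A) = 1/n_B$ and $P(e_C \mid e_B) = 1/n_C$. The heart of the argument is then to connect $A$ and $C$ through the intermediate set $B$ by the law of total probability, marginalising over $e_B$:
\begin{equation}
P(e_A, e_C) = \sum_{e_B \in B} P(e_A)\, P(e_B \mid e_A)\, P(e_C \mid e_B).
\end{equation}
Substituting the uniform values makes every summand equal to $(1/n_A)(1/n_B)(1/n_C)$, and summing the $n_B$ identical terms yields $n_B \cdot (1/n_A)(1/n_B)(1/n_C) = 1/(n_A n_C)$, which is precisely the joint probability required for unlinkability of $A$ and $C$.

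The step I expect to be the main obstacle is justifying the factorisation $P(e_A, e_B, e_C) = P(e_A)\,P(e_B \mid e_A)\,P(e_C \mid e_B)$ that is hidden in the displayed equation. This identity requires $P(e_C \mid e_B, e_A) = P(e_C \mid e_B)$, i.e.\ that $A$ and $C$ are conditionally independent given $B$ (a Markov-chain structure $A - B - C$). I would argue that this is exactly what the phrase ``by the medium of $B$'' is meant to encode: all dependence between the outer sets is routed through the intermediate IOI, so once $e_B$ is fixed, $e_A$ carries no further information about $e_C$. Making this assumption explicit is essential, because pairwise unlinkability of $(A,B)$ and $(B,C)$ alone does not in general force unlinkability of $(A,C)$ — independence is not transitive without such a chain condition. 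Once the medium/Markov structure is granted, the marginalisation above closes the proof.
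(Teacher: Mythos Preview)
Your proposal is correct and follows essentially the same route as the paper: condition on a fixed element of the intermediate set $B$, use the uniform conditional probabilities coming from the two unlinkability hypotheses, and marginalise over $B$ to obtain the uniform $1/n_C$ (equivalently $1/(n_A n_C)$ jointly). You are in fact more careful than the paper, which invokes the needed Markov/conditional-independence step only with the terse phrase ``the probability of $e_b$ and that of $e_c$ is independent''; your explicit discussion of why $P(e_C \mid e_B, e_A) = P(e_C \mid e_B)$ must be assumed (and that it is encoded in ``by the medium of $B$'') is a genuine improvement in rigor over the original argument.
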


\begin{proof}
If $A$ achieves unlinkability by the medium of $B$, and $B$ achieves unlinkability by the medium of $C$, then the probability under the condition that the associated element in $B$ with $e_a$ is $e_b$ and the associated element in $C$ with $e_b$ is $e_c$ is $1/{n_b}{n_c}$. As the probability of $e_b$ and that of $e_c$ is independent, the probability under the condition that the associated element in $C$ with $e_a$ is $e_c$ is $\sum_{B} 1/{n_b}{n_c} = 1/{n_c}$. Therefore, $A$ achieves unlinkability by the medium of $C$.
\end{proof}

Finally, Pfitzmann defines linkability as an antonym of unlinkability.

\begin{definition}
[Pfitzmann's Linkability]
Linkability of two or more items of interest (IOIs, e.g., subjects, messages, actions, ...) from an attacker’s perspective means that within the system (comprising these and possibly other items), the attacker can sufficiently distinguish whether these IOIs are related or not.
\end{definition}

\subsection{Relationship between Anonymity and Unlinkability}
\label{sub:relationship_anonymity_unlinkability}

\subsubsection{Anonymity in terms of Unlinkability}
\label{subsub:anonymity_in_terms_of_unlinkability}

Pfitzmann defines the varieties of anonymity in the specific scenario of anonymous messaging protocols.

\begin{definition}
[Sender Anonymity]
\label{def:sender_anonymity}
Given a message sent from someone, we call ``sender anonymity'' as a specific case of anonymity, in which the possibility, from an attacker's perspective, that the message is sent from each sender is 1/(the number of senders).
\end{definition}

\begin{definition}
[Recipient Anonymity]
Given a message sent to someone, we call ``recipient anonymity'' as a specific case of anonymity, in which the possibility, from an attacker's perspective, that the message is sent to each recipient is 1/(the number of recipients).
\end{definition}

\begin{definition}
[Relationship Anonymity]
Given a message sent from someone to someone, we call ``relationship anonymity'' as a specific case of anonymity, in which the possibility, from an attacker's perspective, that the message is sent from each sender to each recipient is 1/(the number of sender and recipient pairs).
\end{definition}

Note that relationship anonymity can be defined from the viewpoint of outsiders only, i.e., an attacker being neither a sender nor a recipient of a message.

\subsubsection{Equivalency of Sender Anonymity and Message Unlinkability}

Given two or more messages and a sender anonymity set, the message unlinkability defined in \ref{subsub:unlinkability} is a necessary and sufficient condition for the sender anonymity defined in Definition \ref{def:sender_anonymity} in \ref{subsub:anonymity}.

If $a_\alpha$ is a probability that the sender of the messages $\alpha$ is Alice (other variables are similarly defined) and that the senders of both two messages are anonymous (sender anonymity), the two messages are unlinkable (message unlinkability).
\begin{center}
$a_\alpha=b_\alpha=c_\alpha=1/3, a_\beta=b_\beta=c_\beta=1/3$
$\Rightarrow a_\alpha a_\beta+b_\alpha b_\beta+c_\alpha c_\beta=1/3$
\end{center}
Inversely, if the two messages are unlinkable (message unlinkability), the senders of both messages are anonymous (sender anonymity).
\begin{center}
$a_\alpha, b_\alpha, c_\alpha, a_\beta, b_\beta, c_\beta \geq 0,$
$a_\alpha+b_\alpha+c_\alpha=1, a_\beta+b_\beta+c_\beta=1,$
$a_\alpha a_\beta+b_\alpha b_\beta+c_\alpha c_\beta=1/3$
$\Rightarrow a_\alpha=b_\alpha=c_\alpha=1/3, a_\beta=b_\beta=c_\beta=1/3$
\end{center}

Likewise, recipient anonymity and relationship anonymity can be shown as equivalent to message unlinkability as well.

From the fact that all the cases in which anonymity can be defined are included in those in which unlinkability can be defined, unlinkability in all cases is a sufficient condition of anonymity. Inversely, since unlinkability is not a necessary condition of anonymity, eliminating unlinkability between some farther IOIs from the subjects we are interested in does not necessarily eliminate the anonymity of the subjects.

\subsection{Pseudonymity}
Pfitzmann defines pseudonym and pseudonymity in the following manner.

\begin{definition}
[Pfitzmann's Pseudonym] A pseudonym is an identifier of a subject other than one of the subject's real names.
\end{definition}

\begin{definition}
[Pfitzmann's Pseudonymity] Pseudonymity is the use of pseudonyms as identifiers.
\end{definition}

Pseudonymity is a more general notion of a pseudonym.
It is defined as the process of preparing for the use of pseudonyms, e.g., by establishing specific rules on how and under which conditions civil identities of holders of pseudonyms will be disclosed by so-called identity brokers.
An identity broker is a particular type of certification authority for pseudonyms.
Since anonymity can be described as a particular type of unlinkability, cf. \ref{sub:relationship_anonymity_unlinkability}, the concept of identity broker can be generalized to a linkability broker.

\subsection{Untraceability}
In the research field of cryptocurrencies, ``untraceability'' is often used with the same meaning as ``unlinkability''. In particular, untraceability has often been used in papers on e-cash systems \cite{Okamoto1992}\cite{Asokan1997} before the appearance of blockchains and cryptocurrencies. However, since untraceability is known as just a particular type of unlinkability, as Wu mentioned \cite{Wu2018}, we do not treat untraceability as an independent privacy property in this paper.

\section{Analysis of Privacy Properties in Anonymous Cryptocurrency}
\label{sec:analysis_privacy_properties}

In this section, we comprehensively analyze each privacy property for each architecture layer in anonymous cryptocurrencies that we defined in this paper.
In this analysis, we propose four linkability attack models (SLLA, TLLA, RCCLA, SCCLA), with which we cover most of the privacy vulnerabilities in anonymous cryptocurrencies.

Since Pfitzmann's definitions of anonymity and unlinkability \cite{Pfitzmann2010} are most general and most comfortable to understand as far as we know, we adopt it for analyzing anonymous cryptocurrencies.

However, as cryptocurrencies have specific unique characteristics, anonymity has to be considered differently from anonymous messaging protocols, which Pfitzmann chooses as a typical application instance of his theory.

The main difference is the layer of software implementation. In the case of anonymous messaging protocols, the content of each message is handled as just opaque data.
In the case of cryptocurrencies, however, transaction details (e.g., the sender of coins, the recipient of coins, and the amount of each coin) are described in each message that a sender sends to a recipient.
Therefore, privacy attacks in anonymous cryptocurrencies can be made from each layer of Figure \ref{fig:architecture_of_anonymous_cryptocurrency}, while anonymous messaging protocols only need to protect the attacks in a transport layer.

\subsection{Ledger Layer (Layer-1)}
\subsubsection{IOI Model in Layer-1}
In the data structure of cryptocurrencies, several attributes, e.g., sender, recipient, coin, amount of coin, transaction, and confirmation time, can generally be considered as IOIs, as shown in IOI model in Layer-1 of anonymous cryptocurrency (Figure \ref{fig:anonymous_cryptocurrency_ioi_model}). Among them, only sender and recipient are categorized as subjects, which are qualified to be argued regarding anonymity. Therefore, considering the unlinkability of coins by the medium of senders or recipients is the most critical to achieving sender or recipient anonymity. If both sender and recipient anonymity are achieved, we can easily show that relationship anonymity between the sender and recipient is also achieved (\ref{sub:relationship_anonymity_unlinkability}). On the other hand, if either sender anonymity or recipient anonymity is not achieved, we have to conclude that coins are linkable. Therefore, we are required to analyze the unlinkability of other cryptocurrency attributes in the IOI model to identify to what extent the unlinkability of the anonymous cryptocurrency is achieved.

\begin{figure}
\centering
\includegraphics[width=7.7cm]{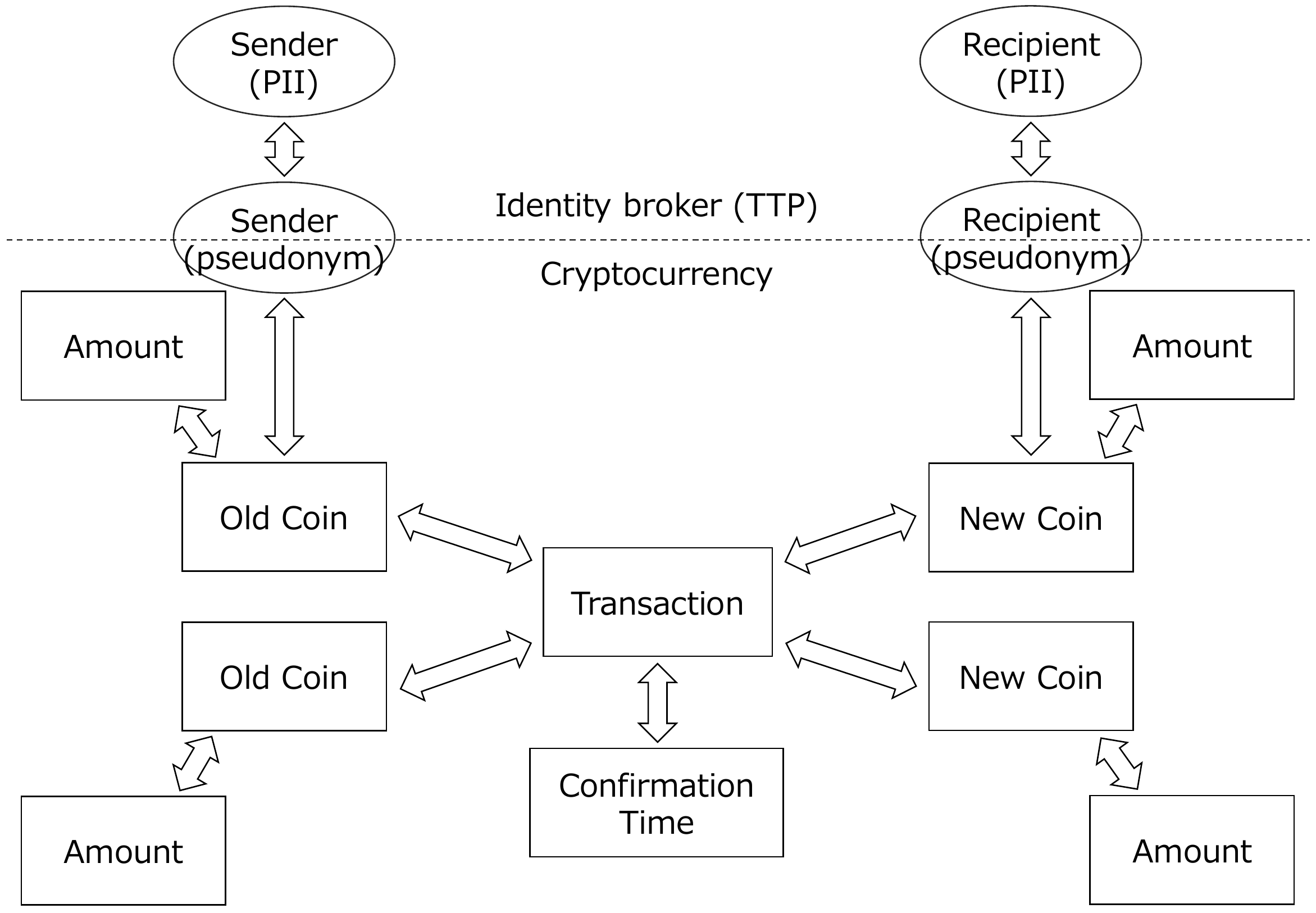}
\caption{IOI Model in Layer-1 of Anonymous Cryptocurrency}
\label{fig:anonymous_cryptocurrency_ioi_model}
\end{figure}

\subsubsection{Pseudonymity}
\label{subsub:pseudonymity_in_layer1}
In most anonymous cryptocurrencies, digital signature schemes based on public-key cryptography are used so that a coin sender can prove that they are the correct recipient of each consumed coin. As the public key is typically a random number, it is conveniently used as a base of each user's pseudonym, which is also called ``(cryptocurrency) address,'' to enhance its privacy instead of using an identification number that suggests a particular user's PII.

Pseudonymization undoubtedly contributes to the anonymity in Layer-1 of anonymous cryptocurrencies.
As shown in IOI model in Layer-1 of anonymous cryptocurrency (Figure \ref{fig:anonymous_cryptocurrency_ioi_model}), any two pseudonyms achieve unlinkability by the medium of the sender's PII, and at the same time by the medium of the recipient's PII.
From the viewpoint of anonymous cryptocurrencies, it seems that PII management is conveniently outsourced to the linkability brokers, e.g., cryptocurrency exchanges.

However, in reality, pseudonymization is useful only when the linkability brokers can securely protect the link between each pseudonym and its corresponding PII.
Therefore, whether anonymity can be achieved in anonymous cryptocurrencies depends on the conventional cybersecurity techniques that current linkability brokers are typically using.
Moreover, since blockchain-based anonymous cryptocurrencies are usually designed to be decentralized, depending on the techniques managed by the TPP is inconsistent with the fundamental design policy of blockchain.

Another concern is the indirect de-anonymization attack using the transaction graph.
As current techniques of analyzing transaction graphs are highly advanced \cite{Meiklejohn2013}\cite{Reid2013}\cite{Androulaki2013}\cite{Goldfeder2014}, indirect de-anonymization is still a threat even if the pseudonym is ideally applied.
Therefore, pseudonymization is not a sufficient condition but just a necessary condition for anonymity.
For anonymous cryptocurrencies to achieve sufficient anonymity, unlinkability between IOIs is still required.

In the case of Coinshuffle \cite{Ruffing2014} and CryptoNote, the recipient address is always freshly created for each new transaction (one-timeness).
Owing to the one-time address, the sender and recipient can keep consumed coin and the newly created coin unlinkable, which makes indirect de-anonymization attacks more challenging.

\subsubsection{Unlinkability}
\label{subsub:unlinkability_in_layer1}
We will define a linkability attack model in which an attacker tries to obtain any information from a shared ledger (blockchain). In this attack model, any user can become an attacker.

\begin{definition}
Shared Ledger Linkability Attack (SLLA):\\
An attacker tries to obtain any unknown information about a victim only from a shared ledger to link with the information they already have on condition that the attacker cannot send any coins to or receive any coins from anyone, including the victim.
\end{definition}

Since shared ledgers are exposed to an unspecified large number of public users, the unlinkability between information on the ledger is required from the beginning. Therefore, SLLA-resistance is the most fundamental requirement for anonymous cryptocurrencies. Most anonymous cryptocurrencies encrypt each information on the shared ledger to achieve unlinkability.

Regarding coin-to-coin unlinkability, Zerocoin \cite{Miers2013} and Zerocash perfectly achieves it due to its another technical coin identifier called `serial number.'
Although a sender discloses a coin commitment as a first coin identifier when the sender sends a coin, the recipient newly declares a serial number as a substitute coin identifier when the recipient consumes the coin in the future.
Therefore, from the other users' perspective, since they can only see seemingly irrelevant two identifiers regarding the same coin, they cannot link those transactions and thus coin transfers.

In the case of mixing techniques, e.g., CoinJoin \cite{Maxwell2013} and Coinshuffle, or a one-time ring signature in CryptoNote, coins are only shuffled with a predefined number of decoys.
This fact means that if the shuffle is repeated over the multiple degrees of coin transfer, coin-to-coin unlinkability would be achieved exponentially with the number of degrees.
However, in the adjacent coin transfer, a link between a sender and a recipient would be probabilistically guessed by $1/n$ ($n$: the number of decoys).

In the case of Mimblewimble, since the same Pedersen commitment is used as a coin identifier, coins can always be linked; that is, anyone can construct a transaction graph.

Regarding coin-to-value unlinkability, it depends on whether each coin value is encrypted or not.

Regarding coin-to-time unlinkability, which means that attackers cannot know when a sender sends a coin or when the receiver consumes the coin, it depends on the transaction management of each anonymous cryptocurrency.
However, since in most of the current anonymous cryptocurrencies the transaction management is delegated to an unspecified large number of miners, coin-to-time unlinkability is usually not achieved.

SLLA-resistance of each cryptocurrency is scrutinized in Appendix \ref{sec:appendix_unlinkability_evaluation}.

\subsubsection{Anonymity}
As discussed in \ref{subsub:pseudonymity_in_layer1}, most of the anonymous cryptocurrencies use a randomized pseudonym.
Therefore, both sender and recipient anonymity are achieved at that point, and there is no way for the privacy attackers to succeed in direct de-anonymization.
Moreover, in the case of Mimblewimble, since cryptocurrency address itself is not used (addressless scheme), not to mention pseudonym, it can never suffer from direct de-anonymization attacks.

On the other hand, the resistance to the indirect de-anonymization attacks of an anonymous cryptocurrency is in the first place affected by its SLLA-resistance discussed in \ref{subsub:unlinkability_in_layer1} (the SLLA-resistance in this context focuses only on coin-to-coin unlinkability).
Even if an anonymous cryptocurrency is not SLLA-resistant, e.g., SLLA-unresistant or SLLA-probabilistically-resistant, however, privacy-enhancing address schemes allow it to achieve better anonymity, because such schemes make it difficult for indirect de-anonymization attackers to associate someone's IOI information by the medium of a single address of them.
That is, the condition for anonymous cryptocurrencies to be resistant to indirect de-anonymization is at least one of the following unlinkable property or addressing schemes.

\begin{enumerate}
\item SLLA-resistant
\item addressless scheme
\item address encryption
\item one-time address
\end{enumerate}

\subsection{Transport Layer (Layer-0)}
\subsubsection{Pseudonymity}
In cryptocurrencies, generally speaking, transport address (= IP address) is an identifier only from which it is not easy to calculate PII of coin sender or recipient.
However, since each transport address is not a random value (actually it usually includes physical or virtual segment information), it can provide a fatal geographical or ISP information that helps to de-anonymize the coin sender or recipient.

Moreover, an IP address once assigned to an active user would not be changed even if it is managed using DHCP, which results in being kept away from one-timeness.
The longer the assignment period, the higher the probability of inducing de-anonymization of the sender or recipient by the medium of their IP addresses.

\subsubsection{Unlinkability}
\label{subsub:layer0_unlinkability}
The unlinkability of the transport layer can be considered from each of Layer-0 or Layer-1 information perspective.

From the Layer-0 information perspective, since miners recognize a transaction as an opaque message, only transaction-to-transaction unlinkability is concerned.
However, it is easy for the miners to associate two or more transactions of the same sender by the medium of their transport address.

From the Layer-1 information perspective, we will define a linkability attack model by the medium of transport layer information.

\begin{definition}
Transport Layer Linkability Attack (TLLA):\\
An attacker, among transaction relay nodes including blockchain miners, tries to obtain any unknown information about a victim, who sends a transaction to a shared ledger or sends a message to someone for any other purpose, to link with the information they already have by the medium of transport layer information, e.g., IP address. Note that the information that can be retrieved from the ledger is included in the attacker's initial knowledge.
\end{definition}

In most cryptocurrencies, including Bitcoin \cite{Nakamoto2008}, miners are entrusted with verifying transactions sent by the users. If a miner in a cryptocurrency can see transaction details, however, they can link the transaction details with an IP address of the sender unless an anonymous messaging protocol is used. If the IP address is already associated with the sender's PII by the miner, the miner can link the transaction details with the PII by the medium of the IP address. That is, attackers using TLLA can easily achieve the linkability of transaction details and sometimes de-anonymize the transaction sender by the medium of IP address (cf. Figure \ref{fig:cryptocurrency_ioi_model_ipaddress}).

\begin{figure}
\centering
\includegraphics[width=7.7cm]{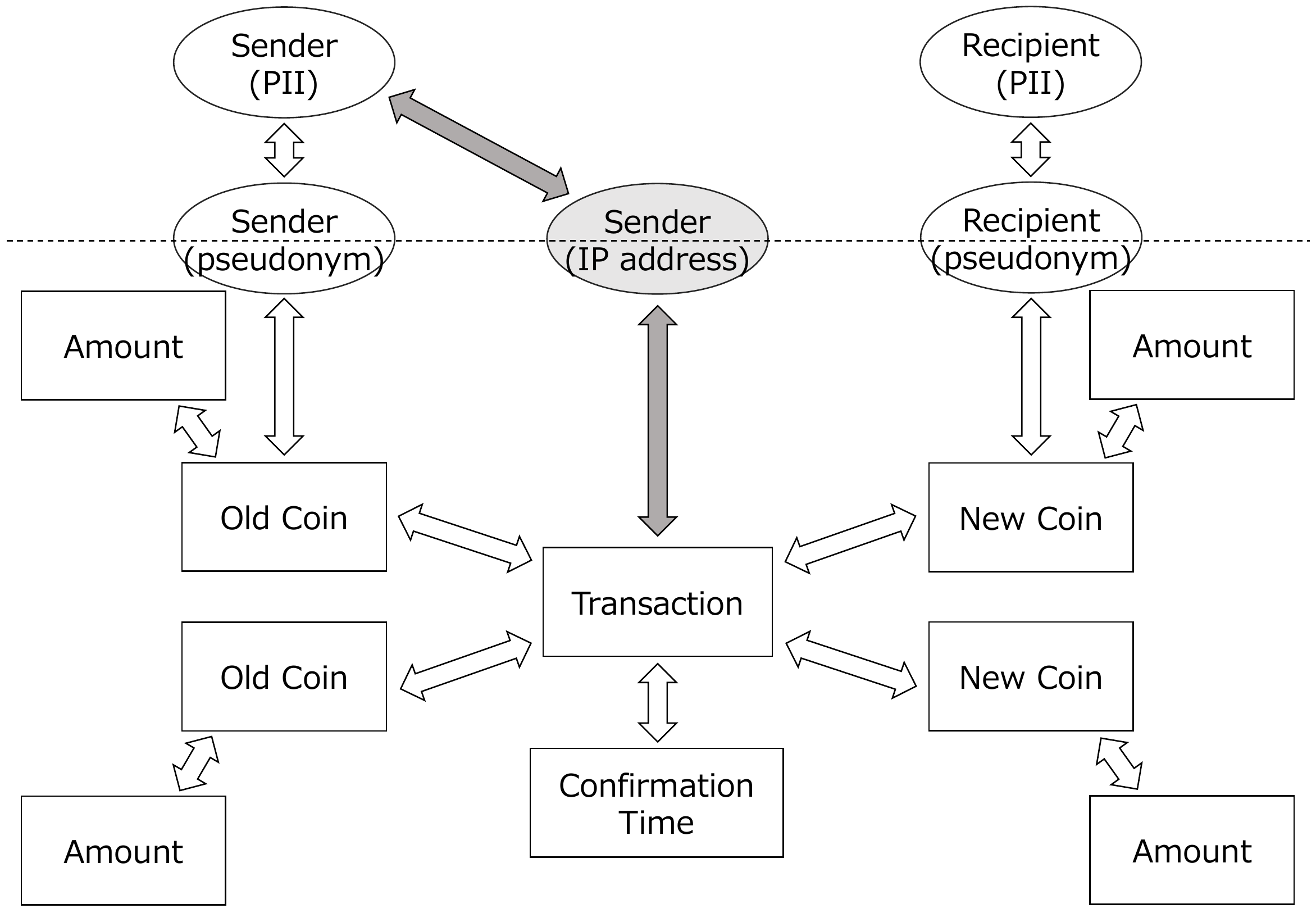}
\caption{IOI Model in Layer-1+0 of Anonymous Cryptocurrency}
\label{fig:cryptocurrency_ioi_model_ipaddress}
\end{figure}

In Zerocash \cite{BenSasson2014} and Mimblewimble \cite{Poelstra2016}, since all the information in each transaction, including the sender's address, is encrypted, the miners cannot obtain any unknown information about the sender from the transaction information. However, since the sender sends their transactions to the miner using their transport address, the miner can link two or more encrypted transactions by the medium of the transport address.

In the case of CryptoNote \cite{Saberhagen2013}, since encrypted information in a transaction is only stealth address, the miner can also link other transaction information, e.g., the amount of each coin, by the medium of the transport address.

Moreover, in the case of Mimblewimble, a sender of a coin has to send a commitment with their secrets $r$ and $v$ to a recipient using their transport address.
Therefore, the sender can link two or more coins sent to the same recipient by the medium of recipient’s transport address, and vice versa.

In the end, in anonymous cryptocurrencies, both from the perspective of Layer-0 and Layer-1 information, for us to prevent the linkability, transport addresses have to be anonymized using anonymization techniques in the transport layer such as Tor \cite{Dingledine2004}, DC-Nets \cite{Chaum1988}, etc.

\subsubsection{Anonymity}
In most anonymous cryptocurrencies, since coin senders send the transactions to the miners, miners know their transport addresses. (Mimblewimble is the only exception in which coin recipients send the transactions to the miners.)
Moreover, if a miner could share some association information between someone's transport address and PIIs for some reason, they would directly de-anonymize them.
For example, they can link activities of sending coins with the sender's PII by the medium of the transport address, e.g., whether the sender sends a particular transaction or not, or when the sender sends the transaction.
Although direct de-anonymization can be protected so long as linkability brokers of transport addresses are trusted and resistant to cyber-attacks, conventional cybersecurity techniques do not necessarily work so much as expected.

On the other hand, indirect de-anonymization from both the information perspective of Layer-0 can be considered in the same way as the indirect de-anonymization in Layer-1, because indirect de-anonymization does not use transport addresses as mediums.
Indirect de-anonymization can also be protected so long as all the information in each transaction is encrypted.
However, this depends on the privacy-enhancing techniques for anonymous cryptocurrencies, as shown in \ref{subsub:layer0_unlinkability}.

\begin{table*}[t]
\centering
\scalebox{0.55}[0.55]{
\begin{tabular}{c|c|c|c|c|c|c|c|c|c|c|c|c|c|c} \hline
& \multicolumn{6}{l|}{Cryptocurrency information} & \multicolumn{8}{l}{Privacy evaluation} \\ \hline
& \multicolumn{6}{l|}{} & \multicolumn{8}{l}{Ledger layer (Layer-1)} \\ \hline
& \multicolumn{6}{l|}{} & \multicolumn{2}{l|}{Pseudonymity} & \multicolumn{3}{l|}{Unlinkability} & \multicolumn{3}{l}{Anonymity} \\ \hline
& Year & Basic features & \multicolumn{4}{l|}{Privacy enhancing schemes} & \multicolumn{2}{l|}{} & \multicolumn{3}{l|}{SLLA-resist.} & \multicolumn{2}{l|}{Direct de-anon. resistance} & Indirect de-anon. resistance \\ \hline
& & Decentralization	& Secret-share & Addressless & Addr encrypt & Value encrypt & Randomness & One-timeness & Coin-to-coin & Coin-to-value & Coin-to-time & Sender anon. & Recip. anon. & \\ \hline
Bitcoin & 2008 & \cmark\cmark & \xmark & \xmark & \xmark & \xmark & \cmark\cmark & \xmark & \xmark & \xmark & \xmark & \cmark\cmark(pseudonym) & \cmark\cmark(pseudonym) & \xmark(SLLA-unresist.) \\ \hline
CoinJoin & 2013 & \xmark & \xmark & \xmark & \xmark & \xmark & \cmark\cmark & \xmark & \cmark(prob.) & \xmark & \xmark & \cmark\cmark(pseudonym) & \cmark\cmark(pseudonym) & \cmark(SLLA-prob.-resist.) \\ \hline
Coinshuffle & 2014 & \cmark\cmark & \xmark & \xmark & \xmark & \xmark & \cmark\cmark & \cmark\cmark & \cmark(prob.) & \xmark & \xmark & \cmark\cmark(pseudonym) & \cmark\cmark(pseudonym) & \cmark\cmark(SLLA-prob.-resist., one-time address) \\ \hline
Zerocoin & 2013 & \cmark(trusted setup) & \cmark\cmark & \cmark\cmark & - & \xmark & - & - & \cmark\cmark & \xmark & \xmark & \cmark\cmark(addressless) & \cmark\cmark(addressless) & \cmark\cmark(SLLA-resist., addressless) \\ \hline
Zerocash & 2014 & \cmark(trusted setup) & \cmark\cmark & \xmark & \cmark\cmark & \cmark\cmark & \cmark\cmark & \xmark & \cmark\cmark & \cmark\cmark & \xmark & \cmark\cmark(pseudonym) & \cmark\cmark(pseudonym) & \cmark\cmark(SLLA-resist., address encryption) \\ \hline
CryptoNote & 2013 & \cmark\cmark & \cmark\cmark & \xmark & \cmark\cmark & \cmark\cmark(Ring CT) & \cmark\cmark & \cmark\cmark & \cmark(prob.) & \cmark\cmark(Ring CT) & \xmark & \cmark\cmark(pseudonym) & \cmark\cmark(pseudonym) & \cmark\cmark(SLLA-prob.-resist., one-time address) \\ \hline
Mimblewimble & 2016 & \cmark\cmark & \cmark\cmark & \cmark\cmark & - & \cmark\cmark & - & - & \xmark & \cmark\cmark & \xmark & \cmark\cmark(addressless) & \cmark\cmark(addressless) & \cmark\cmark(SLLA-unresist., addressless) \\ \hline
\end{tabular}
}
\end{table*}

\begin{table*}
\centering
\scalebox{0.55}[0.55]{
\begin{tabular}{c|c|c|c|c|c|c|c|c|c|c|c|c|c} \hline
& \multicolumn{13}{l}{Privacy evaluation} \\ \hline
& \multicolumn{6}{l|}{Transport layer (Layer-0)} & \multicolumn{7}{l}{Secret-sharing layer (Layer-2)} \\ \hline
& \multicolumn{2}{l|}{Pseudonymity} & \multicolumn{2}{l|}{Unlinkability} & \multicolumn{2}{l|}{Anonymity} & \multicolumn{6}{l|}{Unlinkability} & Anonymity \\ \hline
& \multicolumn{2}{l|}{} & Layer-0 & TLLA-resist. & \multicolumn{2}{l|}{Direct de-anon. resistance} & \multicolumn{3}{l|}{RCCLA-resist.} & \multicolumn{3}{l|}{SCCLA-resist.} & Indirect de-anon. resistance \\ \hline
& Randomness & One-timeness & Tran-to-tran & Coin-to-coin & Sender anon. & Recip. anon. & Sent coin & Value of coin & Time of coin & Consumed coin & Value of coin & Time of coin & \\ \hline
Bitcoin & \cmark(IP addr) & \cmark(IP addr) & \xmark & \xmark & \xmark & \cmark\cmark & - & - & - & - & - & - & - \\ \hline
CoinJoin & \cmark(IP addr) & \cmark(IP addr) & \xmark & \cmark(prob.) & \xmark & \cmark\cmark & - & - & - & - & - & - & - \\ \hline
Coinshuffle & \cmark(IP addr) & \cmark(IP addr) & \xmark & \cmark(prob.) & \xmark & \cmark\cmark & - & - & - & - & - & - & - \\ \hline
Zerocoin & \cmark(IP addr) & \cmark(IP addr) & \xmark & \cmark\cmark & \xmark & \cmark\cmark & \cmark\cmark & \cmark\cmark & \cmark\cmark & \cmark\cmark & \cmark\cmark & \cmark\cmark & \cmark\cmark(SLLA-resist., addressless, SCCLA-resist.) \\ \hline
Zerocash & \cmark(IP addr) & \cmark(IP addr) & \xmark & \cmark\cmark & \xmark & \cmark\cmark & \cmark\cmark & \cmark\cmark & \cmark\cmark & \cmark\cmark & \cmark\cmark & \cmark\cmark & \cmark\cmark(SLLA-resist., address encryption, SCCLA-resist.) \\ \hline
CryptoNote & \cmark(IP addr) & \cmark(IP addr) & \xmark & \cmark(prob.) & \xmark & \cmark\cmark & \cmark\cmark & \cmark\cmark & \cmark\cmark & \cmark & \cmark\cmark(Ring CT) & \cmark & \cmark(SLLA-prob.-resist., one-time address, SCCLA-prob.-resist.) \\ \hline
Mimblewimble & \cmark(IP addr) & \cmark(IP addr) & \xmark & \xmark & \cmark\cmark & \xmark & \cmark\cmark & \cmark\cmark & \cmark\cmark & \xmark & \cmark\cmark & \xmark & \cmark(SLLA-unresist., addressless, SCCLA-unresist.) \\ \hline
\end{tabular}
}
\caption{Privacy Evaluation Results of Anonymous Cryptocurrencies}
\label{table:privacy_evaluation_results}
\end{table*}

\subsection{Secret-Sharing Layer (Layer-2)}
\subsubsection{Trustless Security Model}
In the case of anonymous messaging protocols \cite{Unger2015}, the anonymity is evaluated under the trust established between a sender and a recipient, as Pfitzmann states, ``we assume that the attacker is not able to get information on the sender or recipient from the message content,'' in which each user (e.g., a recipient) does not recognize their counterparty (e.g., a sender) as an attacker.
However, since anonymous cryptocurrencies are developed based on blockchain technologies, the anonymity in anonymous cryptocurrencies is expected to be evaluated under the trustless security model.
In a transaction of anonymous cryptocurrencies, each recipient does not necessarily trust even the sender.

\subsubsection{Pseudonymity}
In Layer-2 of each anonymous cryptocurrency, since the same address scheme is used as in Layer-1, we will not consider pseudonymity again here.

\subsubsection{Unlinkability}
The linkability attack models let us clearly understand the behavior of each user from the viewpoint of achieving anonymity. Since anonymous cryptocurrencies are typically based on trustless security models, any counterparty of transactions can be malicious. Therefore, we will propose linkability attack models in Layer-2, in which even a counterparty of a sender or recipient of each coin is an attacker that might link any IOI and sometimes de-anonymize the victim.

Note that, in these attack models, there is a possibility that the attacker has already known victim's PII to some extent, e.g., they have already de-anonymized some amount of pseudonyms by colluding with a linkability broker, or they are an online store and already know the victim's email address. Therefore, any information obtained from the attacks can be instantly linked to the victim's PII. That is the reason why we are focusing on the attack models in Layer-2.

We will here define an attack model in which a recipient of coins is an attacker, and a sender of the coins is a victim.

\begin{definition}
Receiving Chosen-Coin Linkability Attack (RCCLA):\\
An attacker tries to obtain any unknown information about a victim from coins sent from the victim to link with the information they already have on condition that the attacker can receive coins with any parameter from the victim. Note that the information that can be retrieved from the ledger is included in the attacker's initial knowledge.
\end{definition}

In most cases, the online store is a recipient side that exchanges anonymous cryptocurrency with their product or service. Therefore, most online stores can potentially be attackers using RCCLA, and RCCLA is the most crucial attack model in the real world.

However, RCCLA is relatively challenging for the attacker because the coins can be sent to them only if the victim agrees to send them. And then, RCCLA can be prevented by any appropriate trust establishment techniques.

We will next define an attack model in which a sender of coins is an attacker, and a recipient of the coins is a victim.

\begin{definition}
Sending Chosen-Coin Linkability Attack (SCCLA):\\
An attacker tries to obtain any unknown information about a victim from coins sent from the victim to link with the information they already have on condition that the attacker can send coins with any parameter to the victim. Note that the information that can be retrieved from the ledger is included in the attacker's initial knowledge.
\end{definition}

SCCLA is successful only if the victim outputs some amount of information from their wallets, i.e., the victim sends coins to other users. Collusion between the attacker and a third party to which the victim sends coins makes the success rate of the attack higher (Figure \ref{fig:sccla_collusion}).

\begin{figure}
\centering
\includegraphics[width=7.7cm]{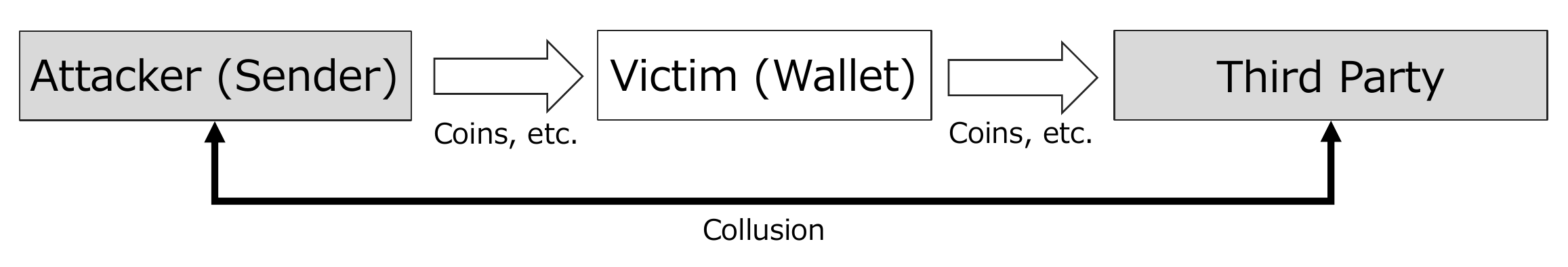}
\caption{SCCLA collusion}
\label{fig:sccla_collusion}
\end{figure}

In the real world, the users in anonymous cryptocurrencies who dedicate themselves to sending coins are unusual. However, SCCLA is relatively easy for an attacker to succeed because the coins can be sent only by the decision of the attacker, and the recipients cannot refuse it in most anonymous cryptocurrencies. Moreover, the victims cannot financially suffer from receiving the coins.

RCCLA-resistance and SCCLA-resistance of each cryptocurrency is scrutinized in Appendix \ref{sec:appendix_unlinkability_evaluation}.

\subsubsection{Anonymity}
Direct de-anonymization can be protected so long as linkability brokers of transport addresses are resistant to cyber-attacks.
However, since linkability brokers are usually TTPs, it depends on conventional cybersecurity techniques employed by them.

Indirect de-anonymization can also be protected so long as all the information in each transaction is encrypted.
However, this depends on the privacy-enhancing techniques for anonymous cryptocurrencies, as shown in \ref{subsub:layer0_unlinkability}.
Therefore, the risk of de-anonymization also depends on privacy-enhancing techniques.

In both cases, once a coin sender or recipient is de-anonymized, their damaged PII depends on unlinkability of the anonymous cryptocurrency.
For example, even in Zerocash, a miner can link two or more encrypted transactions by the medium of a transport address.
Therefore, if the miner has already de-anonymized the sender for some reason, they can link activities of sending coins with the sender's PII by the medium of the transport address, e.g., whether the sender sends a particular transaction or not, or when the sender sends the transaction.

\section{Conclusion}
\label{sec:conclusion}

In this paper, we first proposed an architecture model for anonymous cryptocurrency with three layers.
Next, we studied the definitions of fundamental privacy properties (Pfitzmann's anonymity, unlinkability, and pseudonymity), and comprehensively analyzed each privacy property for each architecture layer in anonymous cryptocurrencies that we defined in this paper.
In this analysis, we proposed four linkability attack models (SLLA, TLLA, RCCLA, SCCLA), with which we cover most of the privacy vulnerabilities in anonymous cryptocurrencies.

Table \ref{table:privacy_evaluation_results} is our privacy evaluation results of Bitcoin and six representative anonymous cryptocurrencies proposed so far (CoinJoin, Coinshuffle, Zerocoin, Zerocash, CryptoNote, Mimblewimble).

First, we reconfirmed that most of the representative anonymous cryptocurrencies achieve higher sender and recipient anonymity than Bitcoin, due to their anonymization schemes (Layer-1 anonymity).
Although allowing attackers to construct the transaction graph is one of the Mimblewimble's shortcomings as shown in \cite{Bogatyy2019}, that does not instantly result in de-anonymization of each user. The reason is that its characteristic of strong anonymity (unnecessity of using addresses) cancels the linkability problem.

Second, we reconfirmed that most of the anonymous cryptocurrencies evaluated in this paper have a common issue regarding the TLLA-resistance (Layer-0 unlinkability). Since a coin sender has to send a transaction message to a miner using their transport address, a miner can associate the transaction information to the transport address. Therefore, anonymization schemes in Layer-0 are required for each anonymous cryptocurrency.

Finally, we found that the unlinkability scheme in Zerocoin and Zerocash is most sophisticated in anonymous cryptocurrencies evaluated in this paper because a coin sender cannot calculate the serial number of the coin, which will be later used when the coin recipient consumes; therefore, the attacker in SCCLA cannot track the coin consumption by the recipient (Layer-2 unlinkability).

\bibliographystyle{IEEEtran}
\bibliography{IEEEabrv,../papers/blockchain,../papers/privacy,../papers/cryptography}

\appendices

\section{Unlinkability Evaluation using Proposed Attack Models}
\label{sec:appendix_unlinkability_evaluation}

In this appendix, we scrutinize SLLA-resistance, RCCLA-resistance, and SCCLA-resistance of three leading cryptography-based anonymous cryptocurrencies (Zerocash \cite{BenSasson2014}, CryptoNote \cite{Saberhagen2013}, and Mimblewimble \cite{Poelstra2016}) to show how to apply each linkability attack model we proposed in this paper.

\subsection{Zerocash}
\subsubsection{Anonymization Scheme}
In Zerocash \cite{BenSasson2014}, all coins are encrypted in the shared ledger. The seemingly unrelated two identifiers, a coin commitment $cm$ and a serial number $sn$ (Figure \ref{fig:zerocash_coin_structure}), are used to identify each coin produced by the previous transaction sender. A transaction sender uses $cm$($cm^{new}$) as an identifier of the coin (called $c^{new}$ from the viewpoint of the transaction sender), while the corresponding recipient uses $sn$($sn^{old}$) as another identifier of the same coin (called $c^{old}$ from the viewpoint of transaction recipient). Consequently, this makes it difficult for most Zerocash users, including even the sender of $c^{new}$, to link the two identifiers of the same coin recorded in the shared ledger (only the recipient of $c^{old}$ can link those).

\begin{figure}
\centering
\includegraphics[width=7.7cm]{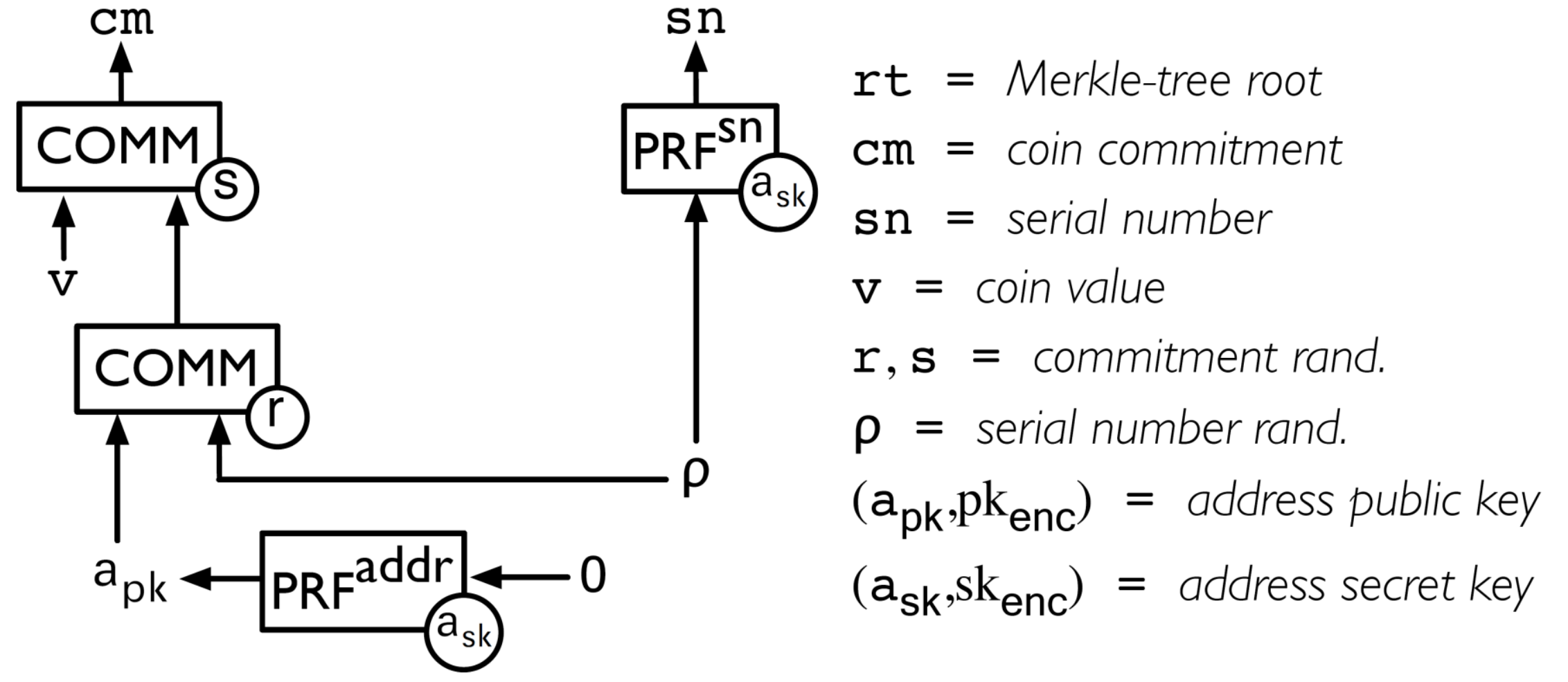}
\caption{Zerocash Coin Structure (This figure is cited from \cite{BenSasson2014})}
\label{fig:zerocash_coin_structure}
\end{figure}

However, people cannot use it in that shape because the recipient cannot verify whether the coins sent from the sender is valid (i.e., the shared secrets sent from the sender are well-formed). Therefore, the sender also includes a zero-knowledge proof $\pi_{pour}$ in each transaction to prove that the identifiers and the secret parameters of each coin are consistent.

\begin{itemize}
\item $cm^{old}$ is properly calculated from the secret information of the coin $c^{old}$ held by the sender (note that $cm^{old}$ is also secret information to the sender.)
\item $cm^{new}$ is properly calculated from the secret information of the coin $c^{new}$ produced and sent to the recipient by the sender.
\item The public key $a_{pk}$ used as an address when the sender has received $cm^{old}$ is properly generated from the sender's secret key $a_{sk}$ (the proof of the proper recipient $a_{pk}$ of $cm^{old}$).
\item $sn^{old}$ is properly calculated from the secret information of the coin $c^{old}$ held by the sender and the sender's secret key $a_{sk}$.
\item The secret information $cm^{old}$ is included in the $cm$ list recorded in the shared ledger (the proof of the fact that $cm^{old}$ had already existed when the sender received the coin $c^{old}$).
\item The total amount of old coins ($v_1^{old} + v_2^{old} + ...$) included in the transaction input is equal to the total amount of new coins ($v_1^{new} + v_2^{new} + ...$) included in the transaction output (note that amount of each coin is secret information).
\end{itemize}

Preventing double-spending of the coin $c^{old}$ does not need to be proved using zero-knowledge proofs. Verifying that $sn^{old}$ has not yet appeared in each of the transaction inputs recorded in the shared ledger is sufficient.

\subsubsection{SLLA-Resistance}
In Zerocash, $cm$ and $sn$ are the only values recorded in the shared ledger. Since $cm$ is generated by a commitment scheme and $sn$ is a returned value by a pseudo-random function, all those values are recognized as random by any third party, and it is difficult for them to link any two values recorded in the shared ledger.

Therefore, Zerocash is SLLA-resistant.

\subsubsection{RCCLA-Resistance}
When a sender $a_{pk}$ (a victim in RCCLA) sends a new coin $c^{new}$ to a recipient $b_{pk}$ (an attacker in RCCLA) through a transaction $tx_{new}$, the sender discloses to the recipient the serial number $sn^{old}$ of the consumed coin $c^{old}$ and the shared parameters of the sent coin $c^{new}$, e.g., $b_{pk}$, $v^{new}$, $\rho^{new}$, $r^{new}$, $s^{new}$, and $cm^{new}$.

The following is our logical verification on whether the recipient $b_{pk}$ of the coin $c^{new}$ can obtain any unknown information about the sender $a_{pk}$ by the medium of each of the shared parameters of the coin $c^{new}$. Note that since RCCLA is an attack model under the condition that the attacker can receive coins with any parameter from the victim, we assume that the attacker can make the recipient send coins with maliciously induced parameters as needed.
\\

\begin{enumerate}
\renewcommand{\labelenumi}{\arabic{enumi})}
\item Serial Number: $sn^{old}$ \\
Since the recipient needs the secret key $z_{sk}$ of the sender $z_{pk}$ of the coin $c^{old}$ in addition to the parameters of the coin $c^{old}$ to calculate the coin commitment $cm^{old}$, the recipient cannot link $sn^{old}$ with $cm^{old}$.

Next, the serial number $sn^{old}$ cannot appear in any other transactions than $tx_{new}$, because the secondary use of the serial number is regarded as double-spending.

Therefore, the attacker $b_{pk}$ cannot obtain any unknown information about the sender $a_{pk}$ by the medium of the serial number $sn^{old}$.

\item Recipient's Address: $b_{pk}$ \\
The recipient's address $b_{pk}$ is their address of the attacker.

Therefore, the attacker $b_{pk}$ cannot obtain any unknown information about the sender $a_{pk}$ by the medium of the recipient's address $b_{pk}$.

\item Amount of Coin: $v^{new}$ \\
Since the new coins $(c^{new}_{i})$ are always produced by the sender $a_{pk}$ every time the old coins $(c^{old}_{i})$ are consumed, where the old coins $(c^{old}_{i})$ are merged or split into the new coins $(c^{new}_{i})$, the same amount $v^{old}$ of one of the old coins $(c^{old}_{i})$ is not used as an amount $v^{new}$ of one of the new coins $(c^{new}_{i})$ unless the sender intentionally reuses it.

Therefore, the attacker $b_{pk}$ cannot obtain any unknown information about the sender $a_{pk}$ by the medium of the amount of coin $v^{new}$.

\item Random Parameters: $\rho^{new}$, $r^{new}$, and $s^{new}$ \\
Since a new coin $c^{new}$ is always produced by the sender $a_{pk}$ every time the old coin $c^{old}$ is consumed, the same random parameters $\rho^{old}$, $r^{old}$, and $s^{old}$ of the old coin $c^{old}$ are not used as random parameters $\rho^{new}$, $r^{new}$, and $s^{new}$ of the new coins $c^{new}$ unless the sender intentionally reuses it.

Therefore, the attacker $b_{pk}$ cannot obtain any unknown information about the sender $a_{pk}$ by the medium of the random parameters $\rho^{new}$, $r^{new}$, and $s^{new}$.

\item Coin Commitment: $cm^{new}$ \\
The recipient $b_{pk}$ can calculate the serial number $sn^{new}$ of the coin $c^{new}$ using their secret key $b_{sk}$, identify the corresponding coin commitment $cm^{new}$ in a transaction $tx_{new}$ on the shared ledger, and hence link $sn^{new}$ with $cm^{new}$ and $sn^{old}$. However, the recipient cannot link $cm^{new}$ and $sn^{old}$ with anymore.

Next, the coin commitment $cm^{new}$ cannot appear in any other transactions than $tx_{new}$, because CMList is not allowed to include the same two or more coin commitments at the same time.

Therefore, the attacker $b_{pk}$ cannot obtain any unknown information about the sender $a_{pk}$ by the medium of the coin commitment $cm^{new}$.
\end{enumerate}

In the end, the attacker $b_{pk}$ in RCCLA cannot obtain any unknown information about a victim $a_{pk}$ from the coins sent from the victim to link with the information they already have.

Therefore, Zerocash is RCCLA-resistant.

\subsubsection{SCCLA-Resistance}
When a sender $a_{pk}$ (an attacker in SCCLA) sends a $c^{new}$ to a recipient $b_{pk}$ (a victim in SCCLA) through a transaction $tx_{new}$, the sender discloses to the recipient the $sn^{old}$ of the consumed $c^{old}$ and the shared parameters of the sent $c^{new}$, e.g., $b_{pk}$, $v^{new}$, $\rho^{new}$, $r^{new}$, $s^{new}$, and $cm^{new}$.

The following is our logical verification on whether the sender $a_{pk}$ of $c^{new}$ can obtain any unknown information about the recipient $b_{pk}$ from the $c^{next}$ sent to a colluder $c_{pk}$ by the recipient by the medium of each of the shared parameters of $c^{new}$. Note that since SCCLA is an attack model under the condition that the attacker can send coins with any parameter to the victim, we assume that the attacker can send coins with maliciously manipulated parameters as needed.
\\

\begin{enumerate}
\renewcommand{\labelenumi}{\arabic{enumi})}
\item Serial Number: $sn^{old}$ \\
The recipient $b_{pk}$ does not use the serial number $sn^{old}$ to consume the coin $c^{new}$. Furthermore, since even the sender cannot calculate the serial number $sn^{new}$ of the sent coin $c^{new}$ because anyone who wants to calculate $sn^{new}$ needs the secret key $b_{sk}$ of the recipient $b_{pk}$, the sender cannot link $sn^{old}$ and $sn^{new}$.

Therefore, the attacker $a_{pk}$ cannot obtain any unknown information about the recipient $b_{pk}$ by the medium of the serial number $sn^{old}$ and $sn^{new}$.

\item Recipient's Address: $b_{pk}$ \\
Since the recipient is not explicitly notified of any address in the Zerocash protocol, the recipient's address $b_{pk}$ is not transmitted to any third party (who may collude with the sender) $c_{pk}$ unless the recipient intentionally notifies the third party of its address outside the protocol.

Therefore, the recipient's address $b_{pk}$ is not spread to anyone, and the attacker $a_{pk}$ cannot obtain any unknown information about the recipient $b_{pk}$ by the medium of the recipient's address $b_{pk}$.

\item Amount of Coin: $v^{new}$ \\
The sender $a_{pk}$ can set an amount of a coin $v^{new}$ to any intentional value, e.g., 0.07007007ZEC, to track the recipient's behavior. However, since the next coins $(c^{next}_{i})$ are always produced by the recipient $b_{pk}$ every time the new coins $(c^{new}_{i})$ are consumed, where the new coins $(c^{new}_{i})$ are merged or split into the next coins $(c^{next}_{i})$, the same amount $v^{new}$ of one of the new coins $(c^{new}_{i})$ is not used as an amount $v^{next}$ of one of the next coins $(c^{next}_{i})$ unless the recipient intentionally reuses it.

Therefore, the amount of coin $v^{new}$ is not spread to anyone other than the recipient $b_{pk}$, and the attacker $a_{pk}$ cannot obtain any unknown information about the recipient $b_{pk}$ by the medium of the amount of coin $v^{new}$.

\item Random Parameters: $\rho^{new}$, $r^{new}$, and $s^{new}$ \\
Since a next coin $c^{next}$ is always produced by the recipient $b_{pk}$ every time the new coin $c^{new}$ is consumed, the same random parameters $\rho^{new}$, $r^{new}$, and $s^{new}$ of the new coin $c^{new}$ are not used as random parameters $\rho^{next}$, $r^{next}$, and $s^{next}$ of the next coins $c^{next}$ unless the recipient intentionally reuses it.

Therefore, the random parameters $\rho^{new}$, $r^{new}$, and $s^{new}$ are not spread to anyone other than the recipient $b_{pk}$, and the attacker $a_{pk}$ cannot obtain any unknown information about the recipient $b_{pk}$ by the medium of the random parameters $\rho^{new}$, $r^{new}$, and $s^{new}$.

\item Coin Commitment: $cm^{new}$ \\
Since the coin commitment $cm^{next}$ that the recipient $b_{pk}$ will produce is calculated from the random parameters $\rho^{next}$, $r^{next}$, and $s^{next}$, etc., $cm^{next}$ cannot be the same value as $cm^{new}$.

Therefore, the coin commitment $cm^{new}$ is not spread to anyone other than the recipient $b_{pk}$, and the attacker $a_{pk}$ cannot obtain any unknown information about the recipient $b_{pk}$ by the medium of the coin commitment $cm^{new}$.
\end{enumerate}

In the end, the attacker $a_{pk}$ in SCCLA and their colluder $c_{pk}$ cannot obtain any unknown information about a victim $b_{pk}$ from the coins sent to the victim to link with the information they already have.

Therefore, Zerocash is SCCLA-resistant.

\subsection{CryptoNote}
\subsubsection{Anonymization Scheme}
CryptoNote \cite{Saberhagen2013}, which is mainly designed based on Bitcoin, enhances its anonymity using schemes named stealth address and one-time ring signature.

In CryptoNote, the transaction sender first provides a one-time randomized recipient address (stealth address) generated from a one-time random value $r$ sampled by the sender and a pair of public keys $(A, B)$ generated by the recipient. Therefore, the recipient's stealth address varies every time, even if the sender sends two or more transactions to the same recipient, thus making CryptoNote achieve recipient anonymity.

However, the stealth address scheme is not sufficient because it still allows de-anonymizing attackers to construct a transaction graph and analyze it. Therefore, CryptoNote provides another anonymization scheme (one-time ring signature), where the signer who consumes each coin bundles their public key and $(n-1)$ other user's dummy public keys to generate a non-interactive zero-knowledge proof that suggests that the coin is signed (or consumed) by the secret key corresponding to one of the bundled public keys (although which key pair is used is not disclosed). Due to this scheme, the de-anonymization attack using the transaction graph is challenging in CryptoNote. However, the success rate of the de-anonymization attack cannot be reduced to below $1/n$ ($n$ is the number of bundled public keys). The degree of sender anonymity depends on $n$. Note that the one-time ring signature scheme has almost the same effect as Bitcoin's mixing techniques \cite{Maxwell2013}\cite{Ruffing2017}.

Whether a sender uses a different stealth address (i.e., different $r$) for each transaction depends on the sender. However, the verifying algorithm of double-spending in one-time ring signature does not allow the recipient to consume both coins signed with the same stealth address.

Note that the amount of coin in CryptoNote has already been encrypted by the method named Ring Confidential Transactions (Ring CT) \cite{Noether2016}.

\subsubsection{SLLA-Resistance}
In the case of CryptoNote, transactions, coins (UTXOs), a sender of each coin, and a recipient of each coin are entirely disclosed to an unspecified large number of users on a shared ledger.

However, since each address of the coin recipients and thus senders is randomized by stealth address, any two coins are unlinkable by the medium of the sender's nor recipient's address, which means that the coins achieve sender and recipient anonymity (\ref{sub:relationship_anonymity_unlinkability}).

On the other hand, although one-time ring signature scrambles the signer who consumes the coin, linking the received coin with a sent coin is possible (success rate is $1/n$). Therefore, the attackers of SLLA can construct a probabilistic transaction graph.

Therefore, CryptoNote is not SLLA-resistant (but weakly SLLA-resistant).

\subsubsection{RCCLA-Resistance}
When a sender $a_{pk}$ (a victim in RCCLA) sends a new coin $c^{new}$ to a recipient $b_{pk}$ (an attacker in RCCLA) through a transaction $tx_{new}$, the sender discloses to the recipient the destination key $P$, the amount of coin $v^{new}$, and the tx public key $R$.

The following is our logical verification whether the recipient $b_{pk}$ of the coin $c^{new}$ can obtain any unknown information about the sender $a_{pk}$ by the medium of each of the shared parameters of the coin $c^{new}$.
\\

\begin{enumerate}
\renewcommand{\labelenumi}{\arabic{enumi})}
\item Destination Key: $P$ \\
The destination key $P$ of the coin $c^{new}$ is a stealth address of the recipient $b_{pk}$.
Since the stealth address is exclusively generated for each coin, the recipient cannot find it in any other coins.

Therefore, the attacker $b_{pk}$ cannot obtain any unknown information about the sender $a_{pk}$ by the medium of the destination key $P$.

\item Amount of Coin: $v^{new}$ \\
Since the new coins $c^{new}_{i}$ are always produced by the sender $a_{pk}$ every time the old coins $c^{old}_{i}$ are consumed, where the old coins $c^{old}_{i}$ are merged or split into the new coins $c^{new}_{i}$, the same amount $v^{old}$ of one of the old coins $c^{old}_{i}$ is not used as an amount $v^{new}$ of one of the new coins $c^{new}_{i}$ unless the sender intentionally reuses it.

Therefore, the attacker $b_{pk}$ cannot obtain any unknown information about the sender $a_{pk}$ by the medium of the amount of coin $v^{new}$.

\item Tx Public Key: $R$ \\
The tx public key $R$ of the transaction $tx^{new}$ is a corresponding public key to the one-time random value $r$. If using the same $r$ in different transactions is prohibited, the recipient cannot find it in any other transaction.

Therefore, the attacker $b_{pk}$ cannot obtain any unknown information about the sender $a_{pk}$ by the medium of the tx public key $R$.

\end{enumerate}

In the end, the attacker $b_{pk}$ in RCCLA cannot obtain any unknown information about a victim $a_{pk}$ from the coins sent from the victim to link with the information they already have.

Therefore, CryptoNote is RCCLA-resistant.

\subsection{Mimblewimble}
\subsubsection{Anonymization Scheme}
\label{subsub_mimblewimble_abstract}
In Mimblewimble \cite{Poelstra2016}, the amount of each coin $v$ is hidden in Pedersen commitment \cite{Pedersen1992}, $p=rG+vH$, which is used as a token of a coin. If someone wants to send the coins to another, the sender sends the commitments with their secrets $r_i$ and $v_i$ to the recipient, the recipient constructs a transaction named Confidential Transactions \cite{Maxwell2016}, including the sent old commitments and the new commitments that the recipient newly creates at that moment. The recipient sends the transaction to the shared ledger. In Mimblewimble, only the recipient of the old commitments consumes the coins because they know the secrets of the old commitments. Thus, they can construct a transaction of Confidential Transactions, in which the sum of all the commitments is zero.

Although the sender still knows the secrets of the old commitments, they cannot consume the old coins after the recipient sends the transaction to the shared ledger because the miners do not allow anyone to double-spend. On the other hand, since only the recipient knows the secrets of the new commitments, the new coins cannot be consumed by anyone else. Additionally, in the transaction of Confidential Transactions, the total amount of new coins is made to equal the total amount of the old coins. Therefore, the ownership of the coins is correctly and safely transferred from the sender to the recipient.

In summary, the possession of a coin is determined by holding the secrets of the commitment, and the transfer of the coin is achieved by sending the secrets of the commitment. Therefore, address information for identifying the users is not required; thus, it is not included in each coin in Mimblewimble. That is, we do not have to anonymize the address in Mimblewimble.

\subsubsection{SLLA-Resistance}
In the case of Mimblewimble, the transactions of Confidential Transactions, including the Pedersen commitments, are disclosed to an unspecified large number of users on a shared ledger.

Since the amount of each coin is hidden in a commitment and each address of the coin senders and recipients is not included in the coin, any two coins are unlinkable by the medium of sender's address or recipient's address, which means that the coins achieve sender and recipient anonymity (\ref{sub:relationship_anonymity_unlinkability}). Having no address brings substantial unlinkability by the medium of any address, avoiding de-anonymization of anyone.

However, since the identifier of the sent coin and that of the received coin is the same, anyone can construct the transaction graph.

Therefore, Mimblewimble is not SLLA-resistant.

\subsubsection{RCCLA-Resistance}
When a sender $a_{pk}$ (a victim in RCCLA) sends an old coin $c^{old}$ to a recipient $b_{pk}$ (an attacker in RCCLA), the sender discloses the old commitments with their secrets $r$ and $v$ to the recipient.

However, since the recipient just consumes the old commitments, the recipient is no longer interested in the commitment. Although the recipient can still construct the transaction graph, they cannot obtain unknown information more than an attacker in SLLA about a victim $a_{pk}$ knows from the coins sent from the victim to link with the information they already have.

Therefore, Mimblewimble is RCCLA-resistant.

\subsubsection{SCCLA-Resistance}
When a sender $a_{pk}$ (an attacker in SCCLA) sends an old coin $c^{old}$ to a recipient $b_{pk}$ (a victim in SCCLA), the sender discloses the old commitments with their secrets $r$ and $v$ to the recipient.

Since the sender can track the new coin $c^{new}$ by the medium of the old commitments, they will be able to know when the new coin will be consumed.

In the end, the attacker $a_{pk}$ in SCCLA can obtain some unknown information about a victim $b_{pk}$ by the medium of the coins sent to the victim to link with the information they already have.

Therefore, Mimblewimble is not SCCLA-resistant.

\end{document}